\newtheorem{theorem}{Theorem}
\newtheorem{conjecture}[theorem]{Conjecture}
\newtheorem{proposition}[theorem]{Proposition}
\newenvironment{proof}[1][Proof]{\noindent\textbf{#1.} }{\ \rule{0.5em}{0.5em}}
\begin{document}

\title{Stochastic resonance in Gaussian quantum channels}

\author{Cosmo Lupo$^1$, Stefano Mancini$^{1,2}$, Mark M. Wilde$^3$}
\address{$^1$ School of Science and Technology, University of Camerino, I-62032 Camerino, Italy}
\address{$^2$ INFN Sezione di Perugia, I-06123 Perugia, Italy}
\address{$^3$ School of Computer Science, McGill University, H3A 2A7 Montreal, Qu\'ebec, Canada}
\ead{\mailto{cosmo.lupo@unicam.it}, \mailto{stefano.mancini@unicam.it},\mailto{mwilde@gmail.com}}

\begin{abstract}
We determine conditions for the presence of stochastic resonance in a lossy bosonic channel with 
a nonlinear, threshold decoding. The stochastic resonance effect occurs if and only if the 
detection threshold is outside of a \textquotedblleft forbidden interval.\textquotedblright\ 
We show that it takes place in different settings: when transmitting classical
messages through a lossy bosonic channel, when transmitting over an entanglement-assisted 
lossy bosonic channel, and when discriminating channels with different loss parameters. 
Moreover, we consider a setting in which stochastic resonance occurs in the transmission of a 
qubit over a lossy bosonic channel with a particular encoding and decoding. 
In all cases, we assume the addition of Gaussian noise to the signal and show that it does not
matter who, between sender and receiver, introduces such a noise. 
Remarkably, different results are obtained when considering a setting for private communication.
In this case the symmetry between sender and receiver is broken and the 
\textquotedblleft forbidden interval\textquotedblright\ may vanish, leading to the occurrence
of stochastic resonance effects for any value of the detection threshold.
\end{abstract}

\pacs{02.50.-r, 03.67.-a}

\section{Introduction}

Stochastic Resonance (SR) is a resonant phenomenon triggered by noise which can be described as a noise-enhanced 
signal transmission that occurs in certain non-linear systems~\cite{GHJM98}. It reveals a context where noise ceases 
to be a nuisance and is turned into a benefit. Loosely speaking one says that a (non-linear) system exhibits SR 
whenever noise {\it benefits} the system \cite{K06,PK09}. Qualitatively, the signature of an SR benefit is an 
inverted-U curve behaviour of the physical variable of interest as a function of the noise strength. It can take place 
in systems where the noise helps detecting faint signals. For example, consider a \emph{threshold detection} 
of a binary-encoded analog signal such that the threshold is set higher than the two signal values. If there is no noise, 
then the detector does not recover any information about the encoded signals since they are sub-threshold, and the same 
occurs if there is too much noise because it will wash out the signal. Thus, there is an optimal amount of noise that will 
result in maximum performance according to some measure such as signal-to-noise ratio, mutual information, or probability 
of success.

Recently the idea that noise can sometimes play a constructive role like in SR has started to penetrate the 
quantum information field too. In quantum communication, this possibility has been put forward in Refs.~\cite{Ting,BM,RGK05} 
and more recently in Refs.~\cite{W09,WK09,QKD,CHP10}. In this setting it has been shown that information theoretic quantities 
may ``resonate'' at maximum value for a nonzero level of noise added intentionally. It is then important to determine general 
criteria for the occurrence of such phenomenon in quantum information protocols.

Continuous variable quantum systems are usually confined to Gaussian states and processes, and SR effects are not expected 
in any linear processing of such systems.  
However, information is often available in digital (discrete) form, and therefore it must be subject to ``de-quantization'' 
at the input of a continuous Gaussian channel and ``quantization'' at the output~\cite{SJ67-GN98}. These processes are 
usually involved in the conversion of digital to analog signals and vice versa. Since these mappings are few-to-many and 
many-to-few, they are inherently non-linear, and similar to the threshold detection described above. We can thus expect the 
occurrence of the SR effect in this case. The simplest model representing such a situation is one in which a binary variable 
is encoded into the positive and negative parts of a real continuous alphabet and subsequently decoded by a threshold detection~\cite{KM03}. 
In some cases, one may not always have the freedom in choosing the threshold, and in such cases it becomes relevant to 
know that SR can take place. This may happen in {\it homodyne detection} if the square of the average signal times the 
overall detection efficiency (which accounts for the detector's efficiency, the fraction of the field being measured, etc.) 
is below the vacuum noise strength \cite{WM93}. It is also the case in discrimination between lossy channels, where the 
unknown transmissivities together with a faint signal make it unlikely to optimally choose the threshold value. 

In this paper, we consider a bit encoded into squeezed-coherent states with different amplitudes that are subsequently sent 
through a Gaussian quantum channel (specifically, a lossy bosonic channel \cite{EW07}). At the output, the states are subjected 
to threshold measurement of their amplitude. In addition to such a setting, we consider one involving entanglement shared by 
a sender and receiver as well as one involving quantum channel discrimination. Finally, we also consider the SR effect in 
quantum communication as well as in private communication. For all of these settings, we determine conditions for the occurrence 
of the SR effect. These appear as {\it forbidden intervals} for the threshold detection values. A ``forbidden interval'' (or region) 
is a range of threshold values for which the SR effect does not occur. We can illustrate this point by appealing again to the 
example of threshold detection of a binary-encoded analog signal. Suppose that the signal values are $A$ or $-A$ where $A>0$. 
Then if the threshold value $\theta$ is smaller in magnitude than the signal values, so that $ |\theta| \leq |A|$, the SR effect 
does not occur---adding noise to the signal will only decrease the performance of the system. In the other case where $ |\theta| > |A|$, 
adding noise can only increase performance because the signals are indistinguishable when no noise is present. As we said before, 
adding too much noise will wash out the signals, so that there must be some optimal noise level in this latter case. 
Our results extend those of Refs.~\cite{KM03,WK09} to other schemes. Remarkably, in the private communication scheme, the width 
of the forbidden interval can vanish depending on whether the sender or the receiver adds the noise. 
This means that in the former case the noise is always beneficial.

\section{Stochastic Resonance in Classical Communication}\label{sec:unassisted-comm}

Let us consider a lossy bosonic quantum channel with transmissivity 
$\eta\in(0,1)$ \cite{HW01,EW07}. 
Our aim is to evaluate the probability of successful decoding, considered 
as a performance measure, when sending classical information through such 
a channel. 

We consider an encoding of the following kind.
Let us suppose that the sender uses as input a displaced and squeezed vacuum.
Working in the Heisenberg picture, the input variable of the communication setup 
is expressed by the operator:
\begin{equation}\label{eq:qmessage1}
\hat{q}e^{-r}-\alpha_{q}(-1)^{X},
\end{equation}
encoding a bit value $X\in\{0,1\}$, where $\hat{q}$ is the
position quadrature operator, $\alpha_{q}\in\mathbbm{R}_{+}$ is the
displacement amplitude, and $r \geqslant 0$ is the squeezing parameter \cite{G05}.

Under the action of a    
lossy bosonic channel \cite{HW01} with transmissivity $\eta$ the input variable transforms
as follows:
\begin{equation}\label{lossych}
\sqrt{\eta} \left( \hat{q}e^{-r} - \alpha_{q}(-1)^{X} \right) + \sqrt{1-\eta} \, \hat{q}_{E} \, ,
\end{equation}
where $\hat{q}_{E}$ is the position quadrature operator of an environment mode
(assumed to be in the vacuum state for the sake of simplicity).

At the receiver's end, let us consider the possibility of adding a random,
Gaussian-distributed displacement $\nu_{q}\in\mathbbm{R}$,
with zero mean and variance $\sigma^2/2$, to the arriving state. 
Then, the output observable becomes as follows:
\begin{equation}\label{noiseB}
\sqrt{\eta} \left( \hat{q}e^{-r} - \alpha_{q}(-1)^{X} \right)  + \sqrt{1-\eta} \, \hat{q}_{E} + \nu_{q} \, .
\end{equation}
Notice that we could just as well consider the addition of noise at the sender's end. In that case, 
the last term $\nu_q$ of (\ref{noiseB}) would appear with a factor $\sqrt{\eta}$ in front.

Upon measurement of the position quadrature operator, the following signal
value $S_{X}$ results
\begin{equation}\label{signal1}
S_{X}=\sqrt{\eta} \left( qe^{-r} - \alpha_{q}(-1)^{X} \right) + \sqrt{1-\eta} \, q_{E} + \nu_{q} \, .
\end{equation}
Following Ref.~\cite{WK09}, we define a random variable summing up all noise
terms:
\begin{equation}\label{eq:all-noise1}
N \equiv \sqrt{\eta} \, q e^{-r} + \sqrt{1-\eta}\, q_{E} + \nu_{q} \, .
\end{equation}
Its probability density $P_N$ is the convolution of the probability densities of the 
random variables $qe^{-r}$, $\nu_{q}$ and $q_{E}$, these being independent of each other. 
Moreover, they are distributed according to Gaussian (normal) distribution, and so $P_N$ reads as
\begin{equation}\label{eq:noise-density1}
P_{N} = {\mathcal{N}}(0,\eta e^{-2r}/2) \circ {\mathcal{N}}(0,\left(  1-\eta\right)/2) \circ {\mathcal{N}}(0,\sigma^{2}/2) \, ,
\end{equation}
where $\circ$ denotes convolution, and
\[
{\mathcal{N}}\left(  \mu,K^{2}\right)  =\frac{1}{\sqrt{2\pi K^{2}}}
\exp\left[  \frac{-\left(  x-\mu\right)^2}{2K^{2}} \right] 
\]
denotes the normal distribution (as function of $x$) with mean $\mu$ and variance $K^{2}$.

Notice that the noise term (\ref{eq:all-noise1}) does not depend on the
encoded value $X$ and neither does its probability density. From
(\ref{eq:noise-density1}) we explicitly get
\begin{equation}\label{pX1}
P_{N} = {\mathcal{N}}(0,(1 - \eta + \eta e^{-2r} +\sigma^{2} )/2) \, .
\end{equation}

The output signal (\ref{signal1}) can now be written as
\[
S_{X} = N - \sqrt{\eta} \, \alpha_{q}(-1)^{X} \, .
\]
The receiver then thresholds the measurement result with a threshold $\theta \in \mathbbm{R}$
to retrieve a random bit $Y$ where
\begin{equation}\label{Y1}
Y\equiv H\left( N - \sqrt{\eta} \, \alpha_{q}(-1)^{X} - \theta \right)  \, ,
\end{equation}
and $H$ is the Heaviside step function defined as $H\left(  x\right)  =1$ if
$x \geqslant 0$ and $H\left(  x\right)  =0$ if $x<0$.

To evaluate the probability of successful decoding,  
we compute the conditional probabilities
\begin{eqnarray}
P_{Y|X}(0|0) &  = & \int_{-\infty}^{+\infty}\left[ 1 - H\left( n - \sqrt{\eta} \, \alpha_{q} - \theta \right) \right]  P_{N}\left(  n\right)  \;dn\nonumber\\
& = & 1-P_{Y|X}(1|0) \, , \\
P_{Y|X}(1|1) & = & \int_{-\infty}^{+\infty}H\left( n + \sqrt{\eta} \, \alpha_{q} - \theta\right) P_{N}\left(  n\right)  \;dn\nonumber\\
& = & 1-P_{Y|X}(0|1) \, .
\end{eqnarray}

Using (\ref{pX1}), we find
\begin{eqnarray}
P_{Y|X}(0|0) & = & \frac{1}{2} + \frac{1}{2} \mathrm{erf}\left[  \frac{ \theta + \sqrt{\eta} \, \alpha_{q} }{\sqrt{ 1 - \eta + \eta e^{-2r} + \sigma^{2} }}\right]  , \label{P00}\\
P_{Y|X}(1|1) & = & \frac{1}{2} - \frac{1}{2} \mathrm{erf}\left[  \frac{ \theta - \sqrt{\eta} \, \alpha_{q} }{\sqrt{ 1 - \eta + \eta e^{-2r} + \sigma^{2} }}\right]  , \label{P11}
\end{eqnarray}
where $\mathrm{erf}\left(z\right)$ denotes the error function:
\[
\mathrm{erf}\left(z\right)  \equiv\frac{2}{\sqrt{\pi}}\int_{0}^{z} \exp\left\{  -x^{2}\right\}  \ \mbox{d}x.
\]

This situation is identical to the one treated in Ref.~\cite{WK09}, and
the forbidden interval can be determined in a simple way by looking at the
probability of successful decoding (note that others have also considered the 
probability-of-success, or error, criterion~\cite{RAC06,PK09a}).
The probability of success is defined as
\begin{equation}\label{Pedef}
{P}_{s}\equiv P_{X}(0)P_{Y|X}(0|0)+P_{X}(1)P_{Y|X}(1|1) \, .
\end{equation}

Setting $P_{X}(0)=\wp$ and $P_{X}(1)=(1-\wp)$, the probability of success is as
follows:
\begin{equation}
\fl {P}_{s} = \frac{1}{2} + 
\frac{1}{2}\wp\ \mathrm{erf}\left( \frac{ \theta + \sqrt{\eta} \, \alpha_{q} }{\sqrt{ 1 - \eta + \eta e^{-2r} + \sigma^{2} }}\right) 
-\frac{1}{2}(1-\wp)\ \mathrm{erf}\left(  \frac{ \theta - \sqrt{\eta} \, \alpha_{q} }{\sqrt{ 1 - \eta + \eta e^{-2r} + \sigma^{2} }}\right). \label{pe1}
\end{equation}
Our goal is to study the dependence of the success probability on the noise variance $\sigma^2/2$.
This leads us to the following proposition:

\begin{proposition}
[The forbidden interval]\label{prop:unassisted-comm} The probability of success $P_{s}$
shows a non-monotonic
behavior versus $\sigma$ iff $\theta\notin\lbrack\theta_{-},\theta_{+}]$,
where $\theta_{\pm}$ are the two roots of the following equation:
\begin{equation}\label{Eq-P1}
\frac{\wp ( \theta + \sqrt{\eta} \, \alpha_{q} )}{(1-\wp)( \theta - \sqrt{\eta} \, \alpha_{q} )} = 
\exp\left[  \frac{ 4 \sqrt{\eta} \, \alpha_{q} \theta }{1-\eta+e^{-2r}\eta }\right]  \, ,
\end{equation}
with $\theta_{-} \leqslant - \sqrt{\eta} \, \alpha_{q} < \sqrt{\eta} \, \alpha_{q} \leqslant \theta_{+}$.
\end{proposition}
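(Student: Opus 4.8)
The plan is to convert the statement into a one-variable monotonicity problem and to recognize (\ref{Eq-P1}) as the stationarity condition evaluated at zero added noise. Observe first that $P_{s}$ in (\ref{pe1}) depends on $\sigma$ only through the total width $w\equiv 1-\eta+\eta e^{-2r}+\sigma^{2}$, which is a strictly increasing function of $\sigma\geqslant 0$; hence $P_{s}$ is non-monotonic in $\sigma$ if and only if it is non-monotonic in $w$ on the half-line $[w_{0},\infty)$, where $w_{0}\equiv 1-\eta+\eta e^{-2r}$ is the value at $\sigma=0$. Writing $\beta\equiv\sqrt{\eta}\,\alpha_{q}>0$ and differentiating (\ref{pe1}) with the help of $\frac{d}{dw}\,\mathrm{erf}(c/\sqrt{w})=-\frac{c}{\sqrt{\pi}\,w^{3/2}}e^{-c^{2}/w}$ yields
\begin{equation}
\frac{dP_{s}}{dw}=\frac{1}{2\sqrt{\pi}\,w^{3/2}}\left[(1-\wp)(\theta-\beta)\,e^{-(\theta-\beta)^{2}/w}-\wp(\theta+\beta)\,e^{-(\theta+\beta)^{2}/w}\right].
\end{equation}
The prefactor is strictly positive, so the monotonic character of $P_{s}$ is decided entirely by the bracketed difference of two Gaussian-weighted terms.

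Next I would set the bracket to zero and rearrange. Cancelling common factors and collecting exponentials gives $e^{-(\theta-\beta)^{2}/w}/e^{-(\theta+\beta)^{2}/w}=e^{4\beta\theta/w}$, so the stationarity condition at width $w$ reads $\frac{\wp(\theta+\beta)}{(1-\wp)(\theta-\beta)}=e^{4\beta\theta/w}$. As $\sigma$ runs over $(0,\infty)$ the quantity $1/w$ runs over $(0,1/w_{0})$, so $P_{s}$ possesses an interior stationary point precisely when this equation admits a solution with $1/w\in(0,1/w_{0})$; the boundary value $1/w=1/w_{0}$, i.e.\ $\sigma=0$, is exactly (\ref{Eq-P1}). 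Consequently its roots $\theta_{\pm}$ are the threshold values at which the slope $dP_{s}/dw$ vanishes at $\sigma=0$, and they separate the regime in which a stationary point has entered the interior from the regime in which it has not. Because the bracket is a difference of two exponentials in $1/w$ with distinct decay rates, it changes sign at most once; therefore an interior stationary point, when present, is a genuine extremum and forces non-monotonicity (recall $P_{s}\to\frac{1}{2}$ as $\sigma\to\infty$), while in its absence $dP_{s}/dw$ keeps a fixed sign and $P_{s}$ is monotone.

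The main work is then to show that (\ref{Eq-P1}) has exactly two roots obeying $\theta_{-}\leqslant-\beta<\beta\leqslant\theta_{+}$ and that these are precisely the endpoints of the forbidden interval. I would compare $L(\theta)\equiv\frac{\wp(\theta+\beta)}{(1-\wp)(\theta-\beta)}$, a rational function with a pole at $\theta=\beta$ and horizontal asymptote $\wp/(1-\wp)$, against the strictly increasing positive exponential $R(\theta)\equiv e^{4\beta\theta/w_{0}}$. On the interval $|\theta|<\beta$ the signals straddle the threshold, so $L<0<R$, no solution exists, and the bracket in the displayed derivative is manifestly negative, making $P_{s}$ strictly decreasing: this is the interior of the forbidden region. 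On $\theta>\beta$, $L$ is positive and decreases from $+\infty$ while $R$ increases to $+\infty$, giving a single crossing $\theta_{+}$; the mirror argument on $\theta<-\beta$ produces a single crossing $\theta_{-}$. Monotonicity of $L$ and $R$ then lets me read off that the stationarity equation acquires an admissible solution exactly for $\theta>\theta_{+}$ or $\theta<\theta_{-}$, i.e.\ for $\theta\notin[\theta_{-},\theta_{+}]$. I expect the delicate step to be this final comparison: one must follow the sign of the bracket on each of the three intervals and verify that sending $\theta$ outside $[\theta_{-},\theta_{+}]$ is exactly what pushes the unique stationary point from $1/w\leqslant 1/w_{0}$ into the admissible range $(0,1/w_{0})$. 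This is most transparent for the symmetric prior $\wp=\frac{1}{2}$, where $L>1$ for $\theta>\beta$ and $L<1$ for $\theta<-\beta$ automatically; for general $\wp$ one must additionally keep track of the asymptote $\wp/(1-\wp)$.
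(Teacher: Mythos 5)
Your route is essentially the paper's: reduce to a single stationarity condition in the noise variance, solve it for the critical width, and identify the forbidden interval as the set of $\theta$ for which the critical point fails to land at positive $\sigma^2$. Your parametrization by $w=1-\eta+\eta e^{-2r}+\sigma^2$ is equivalent to the paper's differentiation with respect to $\sigma^2$, and your stationarity equation at $w=w_0$ is exactly (\ref{Eq-P1}), whose solution for $w$ reproduces (\ref{smin1}). You add two things the paper leaves implicit and which are genuinely worth having: the observation that the bracketed difference of exponentials changes sign at most once in $1/w$, so that an interior stationary point is automatically a maximum and forces non-monotonicity; and the existence/uniqueness analysis of the roots $\theta_\pm$ via the comparison of $L(\theta)=\frac{\wp(\theta+\beta)}{(1-\wp)(\theta-\beta)}$ with $R(\theta)=e^{4\beta\theta/w_0}$, where $\beta=\sqrt{\eta}\,\alpha_q$.

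The step you flag as delicate is, however, a real gap --- and it is a gap in the paper's own proof as well, not only in yours. The claim that an admissible stationary point exists \emph{exactly} for $\theta\notin[\theta_-,\theta_+]$ requires $\ln L(\theta)$ to keep the sign of $\theta$ on all of $|\theta|>\beta$, and for $\wp\neq\frac{1}{2}$ it does not: $L$ crosses $1$ at $\theta=\beta/(1-2\wp)$, beyond which $\ln L$ and $4\beta\theta$ have opposite signs, the critical width $w_*=4\beta\theta/\ln L$ is negative, and the bracket in your derivative keeps a fixed sign, so $P_s$ is monotonically \emph{increasing} in $\sigma$ even though $\theta$ lies outside $[\theta_-,\theta_+]$. (Concretely, for $\wp=0.3$, $\eta=0.8$, $\alpha_q=1$, $r=0$ and $\theta=100$ one finds $\sigma^2_*<0$ and $P_s$ rising monotonically from $\wp$ toward $\tfrac12$.) So the "iff" in the proposition, and your final comparison, are correct as stated only for the symmetric prior $\wp=\tfrac12$ (where $L>1$ on $\theta>\beta$ and $L<1$ on $\theta<-\beta$ automatically); for general $\wp$ the non-monotonic region is $\theta\in(\theta_+,\beta/(1-2\wp))$ or its mirror, not the full complement of $[\theta_-,\theta_+]$. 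Completing your proof therefore requires either restricting to $\wp=\tfrac12$ or amending the statement; the paper's one-line assertion that "$\sigma^2_*>0$ is verified for $\theta\notin[\theta_-,\theta_+]$" silently skips exactly this case analysis.
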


\begin{proof}
We consider ${P}_{s}$ as a function of $\sigma^{2}$. 
In order to have a non-monotonic behavior for ${P}_{s}(\sigma^{2})$, 
we must check for the presence of a local maximum for positive values of $\sigma$. 
By imposing
\begin{equation}\label{derpe1}
\frac{d{P}_{s}(\sigma^{2})}{d\sigma^{2}} = 0
\end{equation}
we obtain the following expression for the critical value of $\sigma$:
\begin{equation}\label{smin1}
\sigma^{2}_\ast = - 1 + \eta - e^{-2r}\eta +
\frac{ 4 \sqrt{\eta} \, \alpha_{q} \theta }{\ln\left[  \frac{ \wp ( \theta + \sqrt{\eta} \, \alpha_{q} ) } { (1-\wp) ( \theta - \sqrt{\eta} \, \alpha_{q} ) } \right]  } \, . 
\end{equation}
The probability of success is a non-monotonic function of $\sigma^2$ iff $\sigma^{2}_\ast > 0$.
This inequality is verified for $\theta\notin\lbrack\theta_{-},\theta_{+}]$, where 
$\theta_\pm$ are the unique solutions of the equation $\sigma^{2}_\ast = 0$, i.e., equation (\ref{Eq-P1}).
Finally, we notice that equation (\ref{Eq-P1}) implies $\frac{\theta + \sqrt{\eta} \, \alpha_{q}}{\theta - \sqrt{\eta} \, \alpha_{q}} > 0$,
i.e., $\theta \not\in [ - \sqrt{\eta} \, \alpha_q , \sqrt{\eta} \, \alpha_q ]$, which implies 
$\theta_{-} \leqslant - \sqrt{\eta} \, \alpha_{q} < \sqrt{\eta} \, \alpha_{q} \leqslant \theta_{+}$.
\end{proof}

The above proposition improves upon the theorem from Ref.~\cite{WK09}\ in
several important ways, due to the assumption that the noise is Gaussian,
allowing us to analyze it more carefully. First, (\ref{smin1}) gives the
optimal value of the noise that leads to the maximum success probability if the
threshold is outside of the forbidden interval (though, note that other works
have algorithms to learn the optimal noise parameter \cite{MK98}). Second,
there is no need to consider an infinite-squeezing limit, as was the case in
Ref.~\cite{WK09}, in order to guarantee the non-monotonic signature of SR.

As an example, figure \ref{Pevssig} shows the probability of success 
$P_s$ as a function of $\sigma$ for various values of the threshold 
$\theta$ around the high signal level $\sqrt{\eta} \, \alpha_q$. 
Identical behavior can be found for values of $\theta$ around the low 
signal level~$- \sqrt{\eta} \, \alpha_q$.

\begin{figure}
\centering\includegraphics[width=0.45\textwidth] {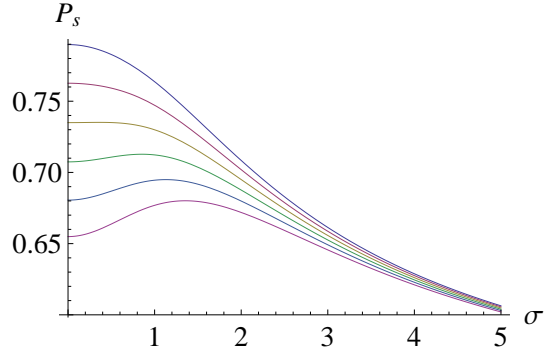} \caption{
The probability of success $P_{s}$, equation (\ref{pe1}), (corresponding to the choice $\wp=1/2$)
versus $\sigma$, for the noise-assisted threshold
detection. The values of the parameters are $\eta=0.8$, $\alpha_{q}=1$ and
$r=0$, giving $\theta_{\pm}\approx\pm 0.96$ after applying
Proposition~\ref{prop:unassisted-comm}. Curves from top to bottom correspond
respectively to $\theta=0.85$, $0.95$ (inside the forbidden interval), $1.05$, 
$1.15$, $1.25$, $1.35$ (outside the forbidden interval). 
Due to symmetry, we also have the same plot for the values $\theta=-0.85$, 
$-0.95$, $-1.05$, $-1.15$, $-1.25$, $-1.35$.} 
\label{Pevssig}
\end{figure}

Figure~\ref{thevsr} plots the forbidden interval in the $\theta,r$
plane. We can see that increasing the squeezing level reduces the
width of the forbidden interval up to $2\sqrt{\eta} \, \alpha_{q}$. 
Similarly, figure \ref{thevsal} plots the forbidden interval in the $\theta
,\alpha_{q}$ plane. Increasing the
amplitude enlarges the width of the forbidden interval up to $2\sqrt{\eta} \, \alpha_{q}$.

\begin{figure}
\centering\includegraphics[width=0.45\textwidth] {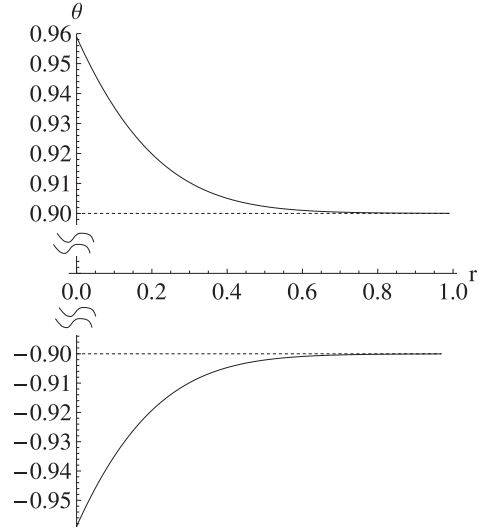}\caption{Forbidden
interval (area between upper and lower curves) in the $\theta,r$ plane, drawn according
to Proposition \ref{prop:unassisted-comm}. The top 
(resp. bottom) solid line corresponds to $\theta_{+}$ (resp. $\theta_-$) while the top 
(resp. bottom) dashed line corresponds to $\sqrt{\eta} \, \alpha_q$ (resp. $-\sqrt{\eta} \, \alpha_q$). 
The values of the other parameters are $\wp=1/2$, $\eta=0.8$, and $\alpha_{q}=1$.}
\label{thevsr}
\end{figure}

\begin{figure}
\centering\includegraphics[width=0.45\textwidth] {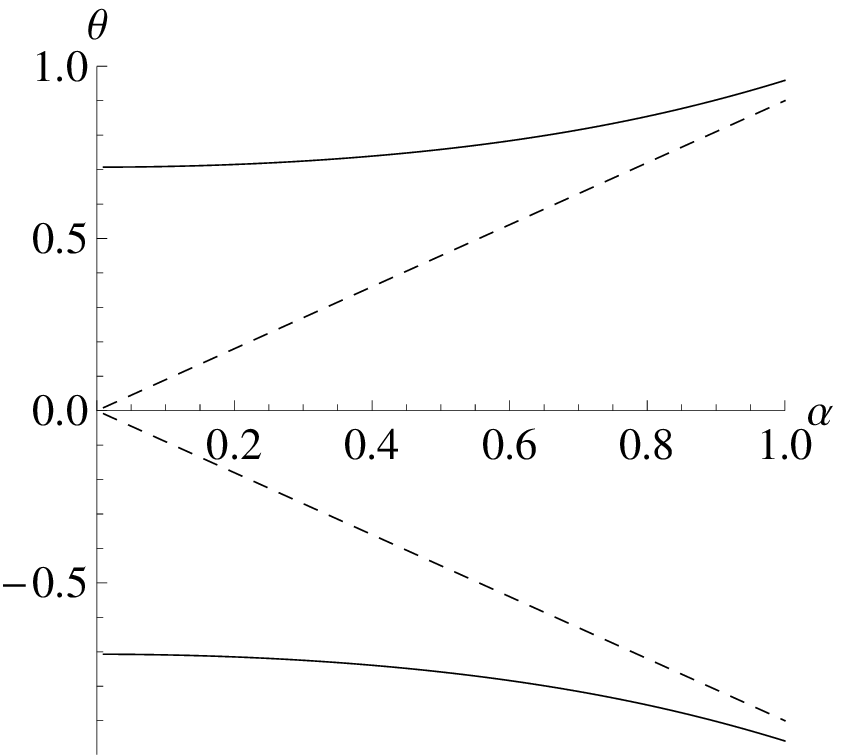}
\caption{Forbidden interval (area between upper and lower curves) in the 
$\theta,\alpha_q$ plane, drawn according to Proposition \ref{prop:unassisted-comm}. 
The top (resp. bottom) solid line corresponds to $\theta_{+}$ 
(resp. $\theta_-$), while the top (resp. bottom) dashed line corresponds to $\sqrt{\eta} \, \alpha_q$ 
(resp. $-\sqrt{\eta} \, \alpha_q$). The values of the other parameters are $\wp=1/2$, 
$\eta=0.8$, and $r=0$.}
\label{thevsal}
\end{figure}

Finally, notice that Proposition \ref{prop:unassisted-comm} holds true even if 
the noise $\nu_q$ is introduced at the sender's end.

\section{Stochastic Resonance in Entanglement-Assisted Classical Communication}

Let us consider the same channel as in the previous section, but we now assume
that the sender and receiver share an entangled state, namely, a two-mode
squeezed vacuum~\cite{G05}, before communication begins. This situation is
somehow similar to the communication scenario in super-dense
coding~\cite{BW92-BK00}, with the exception that we have continuous variable
systems and thresholding at the receiver. Let mode~1 (resp.~2) belong to the
sender (resp.~receiver). The sender displaces her share of the entanglement by
the complex number $-\alpha_{q}\left(  -1\right)  ^{X_{q}}-i\alpha_{p}\left(
-1\right)  ^{X_{p}}$ in order to transmit the two bits $X_{q}$ and $X_{p}$.
The resulting displaced squeezed vacuum operators are as follows:
\begin{eqnarray}
& &  (\hat{q}_{1}-\hat{q}_{2})e^{-r}-\alpha_{q}(-1)^{X_{q}} \, , \label{eq:qmes1}\\
& &  (\hat{p}_{1}+\hat{p}_{2})e^{-r}-\alpha_{p}(-1)^{X_{p}} \, , \label{eq:qmes2}
\end{eqnarray}
where $\hat{q}$, $\hat{p}$ are the position, momentum quadrature operators,
$r \geqslant 0$ is the squeezing strength, $\alpha_{q},\alpha_{p}\in\mathbbm{R}$ are
the displacement amplitudes, and $X_{q},X_{p}\in\{0,1\}$ are binary random variables.

Since $\hat{q}_{1}-\hat{q}_{2}$ commutes with $\hat{p}_{1}+\hat{p}_{2}$, it
suffices to analyze the output for (\ref{eq:qmes1}). After the sender
transmits her share of the entanglement through a lossy bosonic channel with
transmissivity $\eta\in\left(0,1\right)$, the operator describing their
state is as follows:
\[
\left( \sqrt{\eta} \, \hat{q_{1}} - \hat{q}_{2} \right) e^{-r} - 
\sqrt{\eta} \, \alpha_{q}(-1)^{X_{q}} + \sqrt{1-\eta} \, \hat{q}_{E} \, ,
\]
where $\hat{q}_{E}$ is the position quadrature operator of the environment
mode (assumed to be in the vacuum state for the sake of simplicity).

At the receiver's end, let us again consider the possibility of adding a
random, Gaussian-distributed displacement $\nu_{q}\in\mathbbm{R}$ to the
arriving state. Then the output observable becomes as follows:
\[
\left( \sqrt{\eta} \, \hat{q_{1}} - \hat{q}_{2} \right) e^{-r} - 
\sqrt{\eta} \, \alpha_{q}(-1)^{X_{q}} + \sqrt{1-\eta} \, \hat{q}_{E} + \nu_{q} \, .
\]

Repeating the steps of Section~\ref{sec:unassisted-comm}, we have
\[
S_{X_{q}} = N -\sqrt{\eta} \, \alpha_{q}(-1)^{X_{q}} \, ,
\]
where now
\begin{equation}\label{eq:all-noise2}
N\equiv\left( \sqrt{\eta} \, \hat{q_{1}}-\hat{q}_{2}\right) e^{-r}+\sqrt{1-\eta} \, \hat{q}_{E} + \nu_{q} \, .
\end{equation}
and
\begin{equation}\label{eq:noise-density2}
P_{N} = {\mathcal{N}}(0,\eta e^{-2r}/2) \circ {\mathcal{N}}(0,e^{-2r}/2) \circ
{\mathcal{N}}(\left(  1-\eta\right)/2) \circ {\mathcal{N}}(0,\sigma^{2}_q/2) \, ,
\end{equation}
so that
\begin{equation}\label{pX2}
P_{N} =\mathcal{N}\left( 0 , ( 1 - \eta + (1+\eta)e^{-2r} + \sigma^{2}_q )/2\right) \, .
\end{equation}

The receiver then thresholds the measurement result with a threshold
$\theta_{q}\in\mathbbm{R}$, and he retrieves a random bit $Y_{q}$ where
\begin{equation}\label{Y2}
Y_{q}\equiv H\left( N - \sqrt{\eta} \, \alpha_{q}(-1)^{X_{q}} - \theta_{q} \right) \, ,
\end{equation}
and $H$ is the unit Heaviside step function.

Proceeding as in Section~\ref{sec:unassisted-comm}, we obtain the following
input/output conditional probabilities
\begin{eqnarray}
P_{Y_{q}|X_{q}}(0|0) & = & \frac{1}{2}+\frac{1}{2}\mathrm{erf}\left[
\frac{\theta_{q} + \sqrt{\eta} \, \alpha_{q}}{\sqrt{ 1 - \eta + (1+\eta)e^{-2r} + \sigma^{2}_q}}\right]  \, ,\\
P_{Y_{q}|X_{q}}(1|1) & = & \frac{1}{2}-\frac{1}{2}\mathrm{erf}\left[
\frac{\theta_{q} - \sqrt{\eta} \, \alpha_{q}}{\sqrt{ 1 - \eta + (1+\eta)e^{-2r} + \sigma^{2}_q }}\right]  \, .
\end{eqnarray}

Then, writing $P_{X_{q}}(0)=\wp_{q}$ and $P_{X_{q}}(1)=1-\wp_{q}$, the
probability of success reads
\begin{eqnarray}
{P}_{s,q} & = & \frac{1}{2}+\frac{1}{2}\wp_{q} \,
\mathrm{erf}\left(\frac{\theta_{q}+\sqrt{\eta} \, \alpha_{q}}{\sqrt{ 1 - \eta + (1+\eta )e^{-2r} + \sigma^{2}_q }}\right) \nonumber \\
& & -\frac{1}{2}(1-\wp_{q}) \, \mathrm{erf}\left( \frac{\theta_{q}-\sqrt{\eta} \, \alpha_{q}}{\sqrt{ 1 - \eta + (1+\eta)e^{-2r} + \sigma^{2}_q }}\right)  \, . \label{pe2}
\end{eqnarray}
Analogously, for the quadrature $\hat{p}_{1}+\hat{p}_{2}$, we have
\begin{eqnarray}
{P}_{s,p} & = & \frac{1}{2}+\frac{1}{2}\wp_{p}\,\mathrm{erf}\left(
\frac{\theta_{p}+\sqrt{\eta} \, \alpha_{p}}{\sqrt{ 1 - \eta + (1+\eta)e^{-2r} + \sigma^{2}_p }}\right) \nonumber \\
& & -\frac{1}{2}(1-\wp_{p})\,\mathrm{erf}\left( \frac{\theta_{p}-\sqrt{\eta} \, \alpha_{p}}{\sqrt{ 1 - \eta + (1+\eta)e^{-2r} + \sigma^{2}_p }}\right)  \, . \label{pe3}
\end{eqnarray}
We have assumed that the noise terms added to the quadratures 
$\hat{q}_{1}-\hat{q}_{2}$ and $\hat{p}_{1}+\hat{p}_{2}$ have variances $\sigma_q^2/2$ 
and $\sigma_p^2/2$ respectively.

We finally arrive at the following proposition:

\begin{proposition}[The forbidden rectangle] \label{forbrect}
The probability of success $P_s=P_{s,q} P_{s,p}$
shows a non-monotonic behavior 
vs $\sigma_q$, $\sigma_p$ iff $\theta_q\notin[\theta_{q-},\theta_{q+}]$ or 
$\theta_p\notin[\theta_{p-},\theta_{p+}]$ where $\theta_{\bullet\pm}$ are the roots 
of the following equation:
\[
\frac{\wp_{\bullet}( \theta_{\bullet} + \sqrt{\eta} \, \alpha_{\bullet} )}{(1-\wp_{\bullet})( \theta_{\bullet} - \sqrt{\eta} \, \alpha_{\bullet} )}=
\exp\left[\frac{4\sqrt{\eta} \, \alpha_{\bullet}\theta_{\bullet}}{1-\eta+(1+\eta)e^{-2r}}\right],
\]
with 
$\theta_{\bullet -}\le - \sqrt{\eta} \, \alpha_{\bullet} < + \sqrt{\eta} \, \alpha_{\bullet} \le \theta_{\bullet +}$
(here $\bullet$ stands for either $q$ or $p$).
\end{proposition}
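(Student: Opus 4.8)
The plan is to reduce this two-mode statement to two independent applications of Proposition~\ref{prop:unassisted-comm}, exploiting the fact that the two factors of $P_s$ depend on disjoint noise variables. The starting observation is that (\ref{pe2}) exhibits $P_{s,q}$ as a function of $\sigma_q$ alone and (\ref{pe3}) exhibits $P_{s,p}$ as a function of $\sigma_p$ alone, so that $P_s(\sigma_q^2,\sigma_p^2)=P_{s,q}(\sigma_q^2)\,P_{s,p}(\sigma_p^2)$ is a product of functions of separate variables. Each factor has \emph{exactly} the functional form treated in Section~\ref{sec:unassisted-comm}, the only change being that the constant $1-\eta+\eta e^{-2r}$ under the square root is replaced by $1-\eta+(1+\eta)e^{-2r}$. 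Repeating the derivation of (\ref{smin1}) with this substitution then shows, by Proposition~\ref{prop:unassisted-comm}, that $P_{s,\bullet}(\sigma_\bullet^2)$ is non-monotonic in $\sigma_\bullet$ if and only if $\theta_\bullet\notin[\theta_{\bullet-},\theta_{\bullet+}]$, with $\theta_{\bullet\pm}$ the roots of the displayed equation (namely (\ref{Eq-P1}) under the same replacement).

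Next I would pin down the \emph{direction} of monotonicity in the complementary regime, since this is what lets the one-dimensional facts be lifted to the product. When $\theta_\bullet\in[\theta_{\bullet-},\theta_{\bullet+}]$ the analogue of (\ref{smin1}) gives $\sigma^2_\ast\le 0$, so $P_{s,\bullet}$ has no interior critical point and is strictly monotone on $\sigma_\bullet^2>0$; as $P_{s,\bullet}\to 1/2$ when $\sigma_\bullet\to\infty$, the sign of the derivative at large $\sigma_\bullet$ fixes the direction. Computing that derivative one finds it proportional to $-[\wp_\bullet(\theta_\bullet+\sqrt{\eta}\,\alpha_\bullet)e^{-(\theta_\bullet+\sqrt{\eta}\,\alpha_\bullet)^2/t}-(1-\wp_\bullet)(\theta_\bullet-\sqrt{\eta}\,\alpha_\bullet)e^{-(\theta_\bullet-\sqrt{\eta}\,\alpha_\bullet)^2/t}]$ with $t=1-\eta+(1+\eta)e^{-2r}+\sigma_\bullet^2$, and a short check of the sign (using that there is no zero of the bracket for $\sigma_\bullet^2>0$ inside the forbidden interval) shows this derivative to be negative there. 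Hence each factor is strictly \emph{decreasing} on the forbidden interval, and in particular both factors are positive throughout the quadrant $(\sigma_q,\sigma_p)\in[0,\infty)^2$.

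Finally I would assemble the product using positivity to transfer signs: $\partial P_s/\partial\sigma_q^2=P_{s,p}\,(dP_{s,q}/d\sigma_q^2)$ and symmetrically in $p$, so the sign of each partial derivative of $P_s$ coincides with that of the corresponding single-mode derivative, and the behavior of $P_s$ along each noise axis is dictated solely by its own factor. If both thresholds lie inside their intervals, both factors are positive and strictly decreasing, whence $P_s$ is strictly decreasing in each variable and is maximal at $\sigma_q=\sigma_p=0$: no resonance. If instead at least one threshold, say $\theta_q$, lies outside, then $P_{s,q}$ rises as $\sigma_q$ grows from $0$ before falling back to $1/2$, so for every fixed $\sigma_p$ the product inherits this inverted-$U$ shape and $P_s$ is non-monotonic; this is exactly the stated ``or'' condition, and the ordering $\theta_{\bullet-}\le-\sqrt{\eta}\,\alpha_\bullet<\sqrt{\eta}\,\alpha_\bullet\le\theta_{\bullet+}$ is inherited directly from Proposition~\ref{prop:unassisted-comm}. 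I expect the only genuine obstacle to be this mixed case, where one factor is monotone and the other is not: there it is essential that the variables separate, since this is precisely what prevents the monotone factor from cancelling the interior maximum produced by the other.
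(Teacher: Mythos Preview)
Your proposal is correct and follows essentially the same route as the paper: reduce each factor $P_{s,\bullet}$ to the single-mode setting of Proposition~\ref{prop:unassisted-comm} via the substitution $1-\eta+\eta e^{-2r}\mapsto 1-\eta+(1+\eta)e^{-2r}$, and then use the product structure $P_s=P_{s,q}P_{s,p}$ together with positivity of the factors to establish the ``or'' criterion. The paper's own proof is in fact just the one-line substitution, with the product/``or'' logic deferred to a remark after the proof; your version makes that logic explicit and is, if anything, more careful than the original.
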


\begin{proof}
The proof can be obtained from that of Proposition \ref{prop:unassisted-comm}
after replacing $-\eta + \eta e^{-2r}$ with $-\eta + (1+\eta) e^{-2r}$.
\end{proof}

It is worth remarking in the above proposition that either of the conditions
$\theta_q\notin[\theta_{q-},\theta_{q+}]$ or $\theta_p\notin[\theta_{p-},\theta_{p+}]$ 
 (or both) have to be satisfied in order to have a non monotonic behavior for 
$P_s$.
This follows because $P_s$ is a function of two variables $\sigma_q^2$ 
and $\sigma_p^2$, hence it suffices that the partial derivative with respect to one of them 
has a maximum to have a non-monotonic behavior. 
As consequence we have a \textquotedblleft forbidden rectangle\textquotedblright\ 
rather than \textquotedblleft forbidden stripes\textquotedblright\ in the $\theta_q,\theta_p$ 
plane. 

It is also worth noticing the difference of the equation in the above proposition with 
that in the proposition of the previous section. Here the squeezing factor $e^{-2r}$ 
is multiplied by $(1+\eta)$ rather than $\eta$. This is because we now have two 
squeezed modes, one of which is attenuated by the lossy channel.

Finally, notice that Proposition \ref{forbrect} holds true even if the noise $\nu_q$ 
is introduced at the sender's end.

\section{Stochastic Resonance in Channel Discrimination}

Let us now consider two lossy quantum channels with transmissivities $\eta
_{0},\eta_{1}\in(0,1)$ (suppose, without loss of generality, $\eta_{0}>\eta_{1}$). Our aim is to
distinguish them. Differently from previous works, here we do not 
optimize over all possible decoding strategies \cite{lossD}, but concentrate 
on a given threshold-detection scheme.
Then, suppose to use as probe (input) a squeezed and
displaced vacuum operator
\begin{equation}\label{eq:qmessage3}
\hat{q}e^{-r}+\alpha_{q} \,,
\end{equation}
where $\hat{q}$ is the position quadrature operator, $r$ is the squeezing
parameter, and $\alpha_{q}\in\mathbbm{R}$ the displacement amplitude.

The transmission through the lossy channel with transmissivity $\eta_{X}$,
$X=0,1$, can be considered as the encoding of a binary random variable
$X=0,1$ occurring with probability $P_{X}$. The output observable after
transmission is then
\[
\sqrt{\eta_{X}} \left( \alpha_{q} + \hat{q}e^{-r} \right) + \sqrt{1-\eta_{X}} \, \hat{q}_{E} \, ,
\]
where $\hat{q}_{E}$ is the position quadrature operator of the environment
mode (assumed to be in the vacuum state for the sake of simplicity).

At the receiver's end, we consider the addition of noise, modeled by a random,
Gaussian-distributed displacement $\nu_{q}\in\mathbbm{R}$.
Then the output observable becomes
\[
\sqrt{\eta_{X}} \left( \alpha_{q}+\hat{q}e^{-r}\right) + \sqrt{1-\eta_{X}} \, \hat{q}_{E} + \nu_{q} \, .
\]

Upon measurement of the position quadrature operator, the signal value is
\begin{equation}\label{signal3}
S_{X} = \sqrt{\eta_{X}} \left( \alpha_{q}+qe^{-r}\right) + \sqrt{1-\eta_{X}} \, q_{E}+\nu_{q} \, .
\end{equation}
We define a conditional random variable $N|X$ summing all noise terms:
\begin{equation}\label{eq:all-noise3}
N|X \equiv \sqrt{\eta_{X}} \, q e^{-r} + \sqrt{1-\eta_{X}} \, q_{E} + \nu_{q} \, .
\end{equation}
The density $P_{N|X}\left(  n|x\right)$ of the random variable $N|X$ is
\begin{equation}\label{eq:noise-density3}
P_{N|X} = {\mathcal{N}}(0,\eta_{x}e^{-2r}/2) \circ {\mathcal{N}}(0,\left(  1-\eta_{x}\right)/2)
\circ {\mathcal{N}}(0,\sigma^{2}/2) \, .
\end{equation}

Notice that the noise term (\ref{eq:all-noise3}) explicitly depends on the
encoded value $X$ and so does its probability density. From
(\ref{eq:noise-density3}), we explicitly obtain
\begin{equation}\label{pX3}
P_{N|X} = \mathcal{N}\left(  0,( 1 - \eta_{x} + \eta_{x}e^{-2r} + \sigma^{2} )/2 \right) \, .
\end{equation}

The output signal (\ref{signal3}) can now be written as
\[
S_{X} = \sqrt{\eta_{X}} \, \alpha_{q}+N|X \, .
\]
The receiver then thresholds the measurement result with a threshold
$\theta\in\mathbbm{R}$ to retrieve a random bit $Y$ where
\[
Y \equiv H\left(  \theta - \sqrt{\eta_{X}} \, \alpha_{q} - N|X \right) \, ,
\]
and $H$ is the unit Heaviside step function. In this case, the receiver
assigns $Y=1$ if the output signal $S_{X}$ is smaller that the threshold, 
and assigns $Y=0$ otherwise.

The final detected bit $Y$ should be the same as the encoded bit $X$. 
Hence, the probability of success reads like (\ref{Pedef}) where now
\begin{eqnarray}
P_{Y|X}(0|0) & = & \int_{-\infty}^{+\infty} \hspace{-0.4cm} 
\left[ 1 - H\left( \theta -\sqrt{\eta_{0}} \, \alpha_{q}-n\right) \right] P_{N|X} \left( n|0\right) dn \, , \nonumber \\
P_{Y|X}(1|1) & = & \int_{-\infty}^{+\infty} \hspace{-0.4cm} 
H\left( \theta - \sqrt{\eta_{1}} \, \alpha_{q} - n \right) P_{N|X} \left( n|1 \right) dn \, . \nonumber
\end{eqnarray}

Using (\ref{pX3}) we obtain
\begin{eqnarray}
P_{Y|X}(0|0) & = & \frac{1}{2}\left[  1-\mathrm{erf}\left(  \frac{\theta - 
\sqrt{\eta_{0}} \, \alpha_{q}}{\sqrt{ 1 - \eta_{0} + \eta_{0}e^{-2r} + \sigma^{2} }}\right)
\right]  \, , \\
P_{Y|X}(1|1) & = & \frac{1}{2}\left[  1+\mathrm{erf}\left(  \frac{\theta - 
\sqrt{\eta_{1}} \, \alpha_{q}}{\sqrt{ 1 - \eta_{1} + \eta_{1}e^{-2r} + \sigma^{2} }}\right)
\right]  \, .
\end{eqnarray}

Then, writing $P_{X}\left(  0\right)  =\wp$ and $P_{X}\left(  1\right) = 1-\wp$, we get
\begin{eqnarray}
{P}_{s} & = & \frac{1}{2}-\frac{1}{2}\wp\,\mathrm{erf}\left( 
\frac{ \theta - \sqrt{\eta_{0}} \, \alpha_{q} }{\sqrt{ 1 - \eta_{0} + \eta_{0}e^{-2r} + \sigma^{2} }}\right) \nonumber \\
& & +\frac{1}{2}(1-\wp)\,\mathrm{erf}\left(  \frac{ \theta - \sqrt{\eta_{1}} \, \alpha_{q} }{\sqrt{ 1 - \eta_{1} + \eta_{1}e^{-2r} + \sigma^{2} }}\right) \, . \label{pe4}
\end{eqnarray}
Our aim is to analyze the probability of success as a function of the noise
variance, for given values of the parameters $\alpha_{q}$, $r$, $\theta$, 
$\eta_{0}$, $\eta_{1}$.

In the simplest case of $r=0$, we have the following proposition:

\begin{proposition}[The forbidden interval]\label{forbdiscr}
The probability of success $P_{s}$ shows a non-monotonic
behavior as a function of $\sigma$ iff $\theta\notin\lbrack\theta_{-},\theta_{+}]$, 
where $\theta_{\pm}$ are the two roots of the following
equation:
\begin{equation}\label{cond_cd}
\frac{\wp( \theta - \sqrt{\eta_{0}} \, \alpha_{q})}{(1-\wp)( \theta - \sqrt{\eta_{1}} \, \alpha_{q} )}=
\exp\left[  \alpha_{q}^{2}(\eta_{0}-\eta_{1})-2\alpha_{q}\theta(\sqrt{\eta_{0}}-\sqrt{\eta_{1}})\right]  \, ,
\end{equation}
such that $\theta_{-} \leqslant \sqrt{\eta_{1}} \, \alpha_{q} < \sqrt{\eta_{0}} \, \alpha_{q} \leqslant \theta_{+}$.
\end{proposition}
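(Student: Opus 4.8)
The plan is to follow the template of the proof of Proposition~\ref{prop:unassisted-comm}, but the crucial preliminary observation is that setting $r=0$ collapses the two $X$-dependent variances appearing in (\ref{pe4}) to a common value. Indeed, $1-\eta_x+\eta_x e^{-2r}=1$ for every $x$ when $r=0$, so both error-function arguments in (\ref{pe4}) share the single denominator $\sqrt{1+\sigma^2}$. This is exactly what restores the tractable single-equation structure of the earlier propositions, and it is the reason the statement is confined to the $r=0$ case: for $r>0$ the two denominators differ by $(\eta_0-\eta_1)(1-e^{-2r})$ and the stationarity condition no longer reduces to one transcendental equation.

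First I would abbreviate $a_0\equiv\theta-\sqrt{\eta_0}\,\alpha_q$ and $a_1\equiv\theta-\sqrt{\eta_1}\,\alpha_q$, so that at $r=0$ the success probability reads
\[
P_s=\frac{1}{2}-\frac{1}{2}\wp\,\mathrm{erf}\!\left(\frac{a_0}{\sqrt{1+\sigma^2}}\right)+\frac{1}{2}(1-\wp)\,\mathrm{erf}\!\left(\frac{a_1}{\sqrt{1+\sigma^2}}\right).
\]
Using $\frac{d}{dz}\mathrm{erf}(z)=\frac{2}{\sqrt{\pi}}e^{-z^2}$ and the chain rule, differentiation with respect to $\sigma^2$ produces a common strictly positive prefactor $(1+\sigma^2)^{-3/2}/(2\sqrt{\pi})$ times the bracket $\wp\,a_0\,e^{-a_0^2/(1+\sigma^2)}-(1-\wp)\,a_1\,e^{-a_1^2/(1+\sigma^2)}$. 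Setting this to zero and rearranging gives the critical condition
\[
\frac{\wp\,a_0}{(1-\wp)\,a_1}=\exp\!\left[\frac{a_0^2-a_1^2}{1+\sigma^2}\right],
\]
and a short computation shows $a_0^2-a_1^2=\alpha_q^2(\eta_0-\eta_1)-2\alpha_q\theta(\sqrt{\eta_0}-\sqrt{\eta_1})$.

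Next I would solve for the critical variance. Because the right-hand side above is strictly monotone in $\sigma^2$, taking logarithms yields a unique
\[
\sigma^2_\ast=\frac{\alpha_q^2(\eta_0-\eta_1)-2\alpha_q\theta(\sqrt{\eta_0}-\sqrt{\eta_1})}{\ln\!\left[\frac{\wp\,(\theta-\sqrt{\eta_0}\,\alpha_q)}{(1-\wp)\,(\theta-\sqrt{\eta_1}\,\alpha_q)}\right]}-1,
\]
and $P_s(\sigma^2)$ is non-monotonic iff this unique interior stationary point is positive, i.e.\ $\sigma^2_\ast>0$. The boundary case $\sigma^2_\ast=0$ is precisely equation (\ref{cond_cd}), whose two roots are $\theta_\pm$, and the inequality $\sigma^2_\ast>0$ is then verified for $\theta\notin[\theta_-,\theta_+]$, giving the forbidden interval.

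Finally I would fix the ordering of the roots exactly as in Proposition~\ref{prop:unassisted-comm}: equation (\ref{cond_cd}) forces its left-hand side to equal a positive exponential, hence $(\theta-\sqrt{\eta_0}\,\alpha_q)/(\theta-\sqrt{\eta_1}\,\alpha_q)>0$, which excludes $\theta\in(\sqrt{\eta_1}\,\alpha_q,\sqrt{\eta_0}\,\alpha_q)$ and places $\theta_-\le\sqrt{\eta_1}\,\alpha_q<\sqrt{\eta_0}\,\alpha_q\le\theta_+$. The main obstacle is conceptual rather than computational: unlike the earlier settings the noise density here genuinely depends on $X$ through the differing transmissivities, and only the $r=0$ hypothesis makes the two conditional variances coincide so that the single-equation analysis carries over. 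A secondary point to confirm is uniqueness of the stationary point, which follows from the strict monotonicity of $(a_0^2-a_1^2)/(1+\sigma^2)$ in $\sigma^2$.
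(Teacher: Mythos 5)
Your proposal is correct and follows essentially the same route as the paper's proof: differentiate $P_s$ with respect to $\sigma^2$, solve for the unique critical value $\sigma^2_\ast$, characterize non-monotonicity by $\sigma^2_\ast>0$, identify $\theta_\pm$ as the roots of $\sigma^2_\ast=0$ (equation (\ref{cond_cd})), and deduce the ordering from positivity of the exponential. Your explicit observation that the $r=0$ hypothesis collapses the two $X$-dependent variances to the common value $1+\sigma^2$ is a useful clarification that the paper leaves implicit, but it does not change the argument.
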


\begin{proof}
We consider the probability of success as a function of $\sigma^{2}$. 
In order to have a non-monotonic behavior for ${P}_{s}(\sigma^{2})$, 
we must check for the presence of a local maximum. By solving
\begin{equation}\label{derpe3}
\frac{d{P}_{s}(\sigma^{2})}{d\sigma^{2}}=0 \, ,
\end{equation}
we obtain the following expression for the critical value of $\sigma^2$:
\begin{equation}\label{smin3}
\sigma_{\ast}^{2}=-1+\frac{\alpha_{q}^{2}(\eta_{0}-\eta_{1})-2\alpha_{q}
\theta(\sqrt{\eta_{0}}-\sqrt{\eta_{1}})}{\ln\left[  \frac{ \wp(\theta - \sqrt{\eta_{0}} \,
\alpha_{q})}{(1-\wp)( \theta - \sqrt{\eta_{1}} \, \alpha_{q} )} \right]} \,.
\end{equation}
The condition for non-monotonicity of the probability of success as a function
of $\sigma^2$, $\sigma_{\ast}^{2} > 0$, is verified iff $\theta \not\in [\theta_-, \theta_+]$,
where $\theta_\pm$ are the roots of $\sigma_{\ast}^{2}=0$, i.e., equation (\ref{cond_cd}).
Finally, equation (\ref{cond_cd}) implies $\frac{ \theta - \sqrt{\eta_{0}} \,
\alpha_{q}}{\theta - \sqrt{\eta_{1}} \, \alpha_{q}} > 0$, which in turn yields 
$\theta_{-} \leqslant \sqrt{\eta_{1}} \, \alpha_{q} < \sqrt{\eta_{0}} \, \alpha_{q} \leqslant \theta_{+}$.
\end{proof}

If $r\neq0$, (\ref{derpe3}) is not algebraic---hence we did not succeed in 
providing an analytical expression for $\sigma_{\ast}^{2}$. However, numerical
investigations show a qualitative behavior of the forbidden interval's
boundaries identical to that shown in figures \ref{thevsr} and \ref{thevsal}
(notice that here $-\sqrt{\eta} \, \alpha_{q}$ and $\sqrt{\eta} \, \alpha_{q}$ are
replaced by $\sqrt{\eta_{1}} \, \alpha_{q}$ and $\sqrt{\eta_{0}} \, \alpha_{q}$, respectively).

Finally, notice that Proposition \ref{forbdiscr} holds true even if the 
noise $\nu_q$ is introduced at the sender's end.

\section{Stochastic Resonance in Quantum Communication}\label{Sec:quantum}

Let us now consider a setting in which the SR effect can occur in
the transmission of a qubit ($Q$).
The aim is to first encode a qubit state into a bosonic mode ($B$) state, 
send it through the lossy channel, and finally coherently decode, with a threshold mechanism, 
the output bosonic mode state into a qubit system at the receiving end. We should qualify that it is unclear
to us whether one would actually exploit the encodings and decodings given in this section, but
regardless, the setting given here provides a novel scenario in which the SR effect can occur
for a quantum system. We might consider the development in this section to be a coherent version
of the settings in the previous sections. Also, it is in the spirit of a true ``quantum stochastic resonance''
effect hinted at in Ref.~\cite{B05}.

We work in the Schr\"{o}dinger picture, and consider an initial state 
$|\varphi\rangle_Q\otimes |0\rangle_B$, where
$|\varphi\rangle_Q=a|0\rangle_Q+b|1\rangle_Q$ is an arbitrary qubit state 
and $|0\rangle_B$ is the zero-eigenstate of the position-quadrature operator of the bosonic field. 
Here for the sake of simplicity we are going to work with infinite-energy position 
eigenstates rather than with squeezed-coherent states.
 
Suppose that the encoding takes place through the following unitary controlled-operations:
\begin{eqnarray}
U_1^{QB} & = & \left\vert 0\right\rangle_Q \left\langle 0\right\vert\otimes
e^{-i\hat{p}_B x_0}+\left\vert 1\right\rangle_Q \left\langle 1\right\vert \otimes
e^{i\hat{p}_B x_0} \, ,\label{U1} \\
U_2^{QB} & = & I_{Q} \otimes \int_{x \geqslant 0}\left\vert x\right\rangle_B \left\langle
x\right\vert dx + X_{Q}\otimes\int_{x<0}\left\vert x\right\rangle_B
\left\langle x\right\vert dx \, , \label{U2}
\end{eqnarray}
where $I_{Q} = |0\rangle_Q\langle 0| + |1\rangle_Q\langle 1|$ and $X_{Q} = |0\rangle_Q\langle 1| + |1\rangle_Q\langle 0|$,
$\hat{p}_B$ denotes the canonical momentum operator of the bosonic system,
$|\pm x_0\rangle$ are the generalized eigenstates of the canonical position operator
$\hat{q}_B$, and we assume, without loss of generality, $x_{0}\in\mathbbm{R}_+$. This encoding is a coherent version
of encoding a binary number into an analog signal.

The effect of such operations on the initial states is 
\begin{eqnarray}
U_2^{QB}U_1^{QB}|\varphi\rangle_Q\otimes |0\rangle_B
= |0\rangle_Q\otimes \left(a |x_0\rangle_B
+b |-x_0\rangle_B\right).
\end{eqnarray}
Now, with the bosonic mode state factored out from the qubit state, it can be sent through 
the lossy channel. For the sake of analytical investigation, we consider a channel with unit trasmittivity
(an identity channel). 
Then, the output state simply reads 
\[
\rho_B= \left(a |x_0\rangle
+b |-x_0\rangle\right)_B\left(\overline{a}\, \langle x_0|
+\overline{b}\, \langle -x_0|\right),
\]
At this point we consider the possibility of adding Gaussian noise before the (threshold) decoding stage. 
This is modeled as a Gaussian-modulated displacement of the quadrature $\hat{q}_B$. 
The resulting state will be
\begin{equation}\label{rhoBp}
\rho_{B^\prime}=\int dq \, \, {\mathcal{N}}(0,\sigma^2) D(q,0) \rho_B D^{\dag}(q,0) \, ,
\end{equation}
where $D(q,0)$ is the displacement operator (displacing only in the $\hat{q}_B$ direction), 
and ${\mathcal{N}}(0,\sigma^2)$ is a zero-mean, Gaussian distribution with variance $\sigma^2$.

Now, the state $\rho_{B^\prime}$ is decoded into a qubit system 
$Q^{\prime}$ initially prepared in the state $|0\rangle_{Q^{\prime}}$ through 
the following controlled-unitary operations involving a coherent threshold mechanism
\begin{eqnarray}
V_1^{Q^{\prime}B^{\prime}} & = & I_{Q}\otimes\int_{x \geqslant \theta}\left\vert x\right\rangle_B \left\langle
x\right\vert dx 
+X_{Q}\otimes\int_{x<\theta}\left\vert x\right\rangle_B
\left\langle x\right\vert dx \, , \label{V1} \\
V_2^{Q^{\prime}B^{\prime}} & = & \left\vert 0\right\rangle_Q \left\langle 0\right\vert\otimes
e^{i\hat{p}_B x_0}+\left\vert 1\right\rangle_Q \left\langle 1\right\vert \otimes
e^{-i\hat{p}_B x_0} \, . \label{V2}
\end{eqnarray}
Clearly, if $\theta=0$ the decoding unitaries (\ref{V1}), (\ref{V2}) are the inverse 
of the encoding ones (\ref{U1}), (\ref{U2}). They hence allow unit fidelity encoding/decoding 
if $\sigma^2=0$. However, if $\theta\neq 0$, there could be a nonzero optimal value of $\sigma^2$.

The final qubit state is
\begin{eqnarray}
\rho_{Q^{\prime}} & = & \mathcal{E}(|\varphi\rangle_Q\langle\varphi|) \label{sqp} \\
& = & {\rm Tr}_{B^{\prime}} \left\{ V_2^{Q^{\prime}B^{\prime}} V_1^{Q^{\prime}B^{\prime}} 
 \left(|0\rangle_Q^{\prime}\langle 0|\otimes \rho_{B^{\prime}}\right) \left( V_1^{Q^{\prime}B^{\prime}}\right)^{\dag}
 \left(V_2^{Q^{\prime}B^{\prime}}\right)^{\dag} \right\} \\
& = & \left[ |a|^2 (1-\Pi_<) + |b|^2 \Pi_>) \right] |0\rangle_Q\langle 0| 
+ \left[ |a|^2 \Pi_< + |b|^2 (1-\Pi_>) \right] |1\rangle_Q\langle 1| \nonumber\\
& & +(1-\Pi_<-\Pi_>) \left( a \bar{b}|0\rangle_Q\langle 1| + \bar{a}b |1\rangle_Q\langle 0| \right) \, , \label{sigmaout}
\end{eqnarray}
where 
\begin{eqnarray}
\Pi_< & = & \frac{1}{2} + \frac{1}{2} {\rm erf}\left( \frac{\theta-x_0}{\sqrt{2\sigma^2}} \right) \, , \\
\Pi_> & = & \frac{1}{2} - \frac{1}{2} {\rm erf}\left( \frac{\theta+x_0}{\sqrt{2\sigma^2}} \right) \, .
\end{eqnarray}
Then we can calculate the average channel fidelity
\begin{equation}\label{Pefid}
\langle \mathcal{F} \rangle = \int d\varphi \,\, {}_Q\langle\varphi|\rho_{Q^{\prime}}|\varphi\rangle_Q \, ,
\end{equation}
where $d\varphi$ is the uniform measure induced by the Haar measure on $\mathrm{SU}(2)$.

Using (\ref{rhoBp}), (\ref{V1}), (\ref{V2}), (\ref{sqp}) and (\ref{Pefid}) we finally obtain
\begin{equation}\label{fidqc}
\langle \mathcal{F} \rangle  = \frac{1}{2} + \frac{1}{4}
\left[ {\rm erf}\left(\frac{\theta+x_0}{\sqrt{2\sigma^2}}\right) - {\rm erf}\left(\frac{\theta-x_0}{\sqrt{2\sigma^2}}\right) \right].
\end{equation}

Then, we have the following proposition:

\begin{proposition}\label{forbquantum}
[The forbidden interval] The average channel fidelity $\langle \mathcal{F} \rangle$ shows a
non-monotonic behavior as a function of $\sigma$ iff $\theta \notin [- x_0 , x_0 ]$.
\end{proposition}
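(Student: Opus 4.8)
The plan is to mirror the proof of Proposition~\ref{prop:unassisted-comm}: I regard the average fidelity $\langle\mathcal{F}\rangle$ of (\ref{fidqc}) as a function of the noise variance $\sigma^2$, find its stationary points, and convert the existence of a stationary point at strictly positive $\sigma^2$ into a condition on the threshold $\theta$. First I would differentiate (\ref{fidqc}) with respect to $t\equiv\sigma^2$. Writing the two error-function arguments as $(\theta+x_0)(2t)^{-1/2}$ and $(\theta-x_0)(2t)^{-1/2}$ and using $\frac{d}{dz}\mathrm{erf}(z)=\frac{2}{\sqrt{\pi}}e^{-z^2}$ with the chain rule, the derivative factors as a prefactor proportional to $-(2t)^{-3/2}$ (hence strictly negative for $t>0$) times the bracket
\[
(\theta+x_0)\,e^{-(\theta+x_0)^2/2\sigma^2}-(\theta-x_0)\,e^{-(\theta-x_0)^2/2\sigma^2}.
\]

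Setting this bracket to zero and taking logarithms, the Gaussian factors combine (since $(\theta+x_0)^2-(\theta-x_0)^2=4\theta x_0$) into the stationarity condition $\frac{\theta+x_0}{\theta-x_0}=\exp\left[\frac{2\theta x_0}{\sigma^2}\right]$, whose unique root is
\[
\sigma^2_\ast=\frac{2\theta x_0}{\ln\left[(\theta+x_0)/(\theta-x_0)\right]}.
\]
This is exactly (\ref{smin1}) under the identifications $\wp=1/2$, $\sqrt{\eta}\,\alpha_q\mapsto x_0$, a rescaling $\sigma^2\mapsto 2\sigma^2$, and the vanishing of the channel-noise constant $1-\eta+\eta e^{-2r}$ (which is zero here because we use an identity channel and position eigenstates). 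As in the earlier proof, non-monotonicity is equivalent to $\sigma^2_\ast>0$. The logarithm is real only when $(\theta+x_0)/(\theta-x_0)>0$, i.e.\ when $|\theta|>x_0$; a short sign comparison of the numerator $2\theta x_0$ with the denominator then gives $\sigma^2_\ast>0$ for both $\theta>x_0$ and $\theta<-x_0$, so a genuine interior maximum exists and $\langle\mathcal{F}\rangle$ is non-monotonic. Since $\sigma^2_\ast\to 0$ as $\theta\to\pm x_0$, the forbidden interval degenerates to exactly $[-x_0,x_0]$, the minimal-width limit of Proposition~\ref{prop:unassisted-comm}.

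The step that requires the most care is the complementary range $|\theta|\le x_0$, where the stationarity equation has no real root and one cannot simply invoke $\sigma^2_\ast\le 0$. Here I would establish monotonicity directly from the sign of the derivative: for $-x_0<\theta<x_0$ one has $\theta-x_0<0$, so the bracket displayed above is a sum of two strictly positive terms and never vanishes, forcing $d\langle\mathcal{F}\rangle/d\sigma^2<0$ for all $\sigma^2>0$ and hence a monotonic decrease. The boundary cases $\theta=\pm x_0$ collapse (\ref{fidqc}) to a single error function of $(2\sigma^2)^{-1/2}$, which is likewise monotone. Combining the two regimes shows that $\langle\mathcal{F}\rangle$ is non-monotonic precisely when $\theta\notin[-x_0,x_0]$, as claimed.
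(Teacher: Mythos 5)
Your proposal is correct and follows essentially the same route as the paper: differentiate $\langle\mathcal{F}\rangle$ with respect to $\sigma^2$, solve for the critical value $\sigma^2_\ast = 2\theta x_0/\ln[(\theta+x_0)/(\theta-x_0)]$, and read off the forbidden interval from its positivity. Your treatment is in fact slightly more complete than the paper's, since you explicitly verify monotonic decrease for $|\theta|\leqslant x_0$ (where the stationarity equation has no real root) by observing that the bracket in the derivative is a sum of positive terms, a step the paper leaves implicit.
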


\begin{proof}
We consider $\langle \mathcal{F} \rangle$ as a function of $\sigma^2$. 
In order to have a non-monotonic behavior, we must check for the presence of a local maximum
for $\sigma^2 > 0$.
The condition $\frac{d\langle F \rangle}{d\sigma}=0$ yields the 
following expression for the critical value of $\sigma^2$
\begin{equation}
\sigma^2_\ast = \frac{2\theta x_0}{\ln\left(\frac{\theta+x_0}{\theta-x_0}\right)} \, .
\end{equation}
We hence conclude that the average fidelity is a non-monotonic function of 
$\sigma$ iff $\theta \notin [ - x_0 , x_0 ]$.
\end{proof}

\begin{figure}
\centering\includegraphics[width=0.45\textwidth] {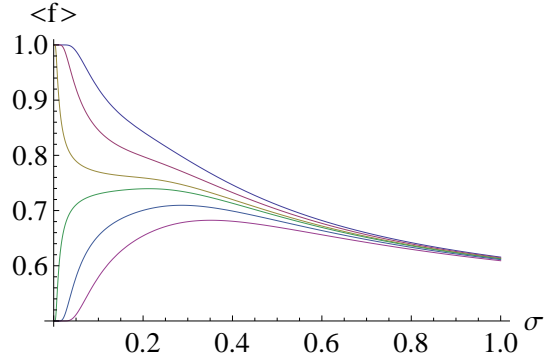}
\caption{The average channel fidelity $\langle \mathcal{F} \rangle$, equation (\ref{fidqc}) 
vs $\sigma$ for $x_0=0.3$ and several values of $\theta$, inside and outside the forbidden interval. 
From top to bottom, $\theta=0.20$, $\theta=0.25$, $\theta=0.29$
(inside the forbidden interval), $\theta=0.31$, $\theta=0.35$, $\theta=0.40$
(outside the forbidden interval).}
\label{fidelity}
\end{figure}

Figure~\ref{fidelity} shows $\langle \mathcal{F} \rangle$ as a function of $\sigma$ for 
a given value of $x_0$ and several values of $\theta$, both inside and outside
the forbidden interval. 
A non-monotonic behavior is observed in the latter cases. 
It is worth noticing that the presence of noise can augment the average 
channel fidelity above the value of $2/3$, which is the maximum value achievable by 
measure-and-prepare protocols.
Hence, in this sense, the presence of noise can lead to a transition
from a classical to a quantum regime in the average communication fidelity.

As shown in figure \ref{quantum}, an analogous SR-like effect is observed in the same
parametric region for the logarithmic negativity \cite{P05}
\begin{equation}\label{LogNeg} 
LN=\log_2\left\{\mathrm{Tr}\left[\mathcal{E}\otimes\mathcal{I}\left(|\Psi\rangle\langle\Psi|\right)\right]^{\Gamma}\right\} \, ,
\end{equation}
where $\Gamma$ indicates the partial transpose operation, 
$\mathcal{I}$ the identity map and $|\Psi\rangle$ a maximally entangled
two-qubit Bell state. This quantity is the logarithmic negativity of the Choi-Jamiolkowski
state associated to the quantum channel \cite{J72} 
and gives an upper bound on its two-way distillable entanglement (this latter quantity
in turn equals the quantum capacity 
of a channel assisted by unbounded two-way classical communication \cite{BDSW96}).

\begin{figure}
\centering\includegraphics[width=0.45\textwidth] {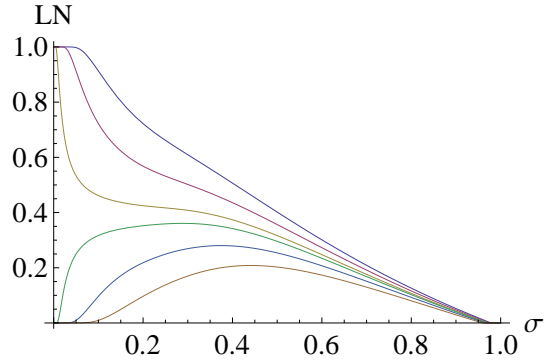}
\caption{The logarithmic negativity, equation (\ref{LogNeg}), vs $\sigma$ for $x_0=0.3$ and
several values of $\theta$ inside and outside the forbidden interval. 
From top to bottom, $\theta=0.20$, $\theta=0.25$, $\theta=0.29$
(inside the forbidden interval), $\theta=0.31$, $\theta=0.35$, $\theta=0.40$
(outside the forbidden interval).}
\label{quantum}
\end{figure}

Finally, notice that Proposition \ref{forbquantum} holds true 
even if the noise $\nu_q$ is introduced at the sender's end.

\section{Stochastic Resonance in Quantum Key Distribution}

In this section we investigate stochastic resonance effects in
quantum key distribution (see also \cite{RGK05,QKD}). An achievable rate for private classical
communication over a quantum channel is given by the following formula \cite{Dev05,CWY04}:
\begin{equation}\label{CP}
C_P = I(A:B) - I(A:E) \, ,
\end{equation}
where $I(A:B)$ and $I(A:E)$ are the maximal mutual information between
sender ($A$) and receiver ($B$) and sender and eavesdropper ($E$), respectively. In the above formula,
$A$ is a classical system, $B$ and $E$ are quantum systems, and the optimization is over all ensembles that Alice
can prepare at the input.

We consider the same encoding of a binary variable into a single bosonic mode expressed by Eq.\ (\ref{eq:qmessage1}).
Also, we consider the case in which information is transmitted through a lossy bosonic channel characterized by the
transmissivity parameter $\eta$ \cite{HW01}. The input variable after the transmission through the noisy channel is 
hence expressed by Eq.\ (\ref{lossych}).

First, suppose that the noise is added at the receiver's end.
In this case, the quantity $I(A:E)$ is not affected at all by the noise, and so 
the behavior of $C_P$ versus $\sigma$ is simply determined by $I(A:B)$. 
Since this latter relies on the probability of success given by equation (\ref{pe1}),
we are in the same situation as in Proposition \ref{prop:unassisted-comm}.
In particular, the private communication rate in Eq.\ (\ref{CP}) exhibits a 
non-monotonic behavior as a function of the noise variance if and only if
the threshold value $\theta$ lies outside of the forbidden interval 
$\lbrack\theta_{-},\theta_{+}]$, where $\theta_{\pm}$ are the two roots of equation (\ref{Eq-P1}).

Second, suppose that the noise is added at the sender's end. In this case, equation (\ref{noiseB}) changes as follows:
\begin{equation}
\sqrt{\eta}\left( \hat{q}e^{-r} - \alpha_{q}(-1)^{X} + \nu_{q}\right)  + \sqrt{1-\eta} \, \hat{q}_{E} \, .
\end{equation}
Using a threshold decoding, we get the same
expression as in equation (\ref{pe1}) for the success probability,
upon replacing $\sigma^2 \to \eta\sigma^2$. From this,
it is straightforward to calculate the mutual information $I(A:B)$.
In turn, we assume that the eavesdropper has access to the conjugate mode at the output of 
the beam-splitter transformation, and so its variable is given by
\begin{equation}\label{evemode}
\sqrt{1-\eta}\left( \hat{q}e^{-r} - \alpha_{q}(-1)^{X} + \nu_{q} \right) + \sqrt{\eta} \, \hat{q}_{E} \, .
\end{equation}
The maximum mutual information between the sender and eavesdropper is given in terms of the 
Holevo information \cite{Holevo}.
Since the average state corresponding to the variable (\ref{evemode}) 
is non-Gaussian, the analytical evaluation of its Holevo information appears not to be possible.
However, the monotonicity property of the Holevo information under composition of quantum channels
ensures that it has to be a monotonically decreasing function of the noise variance $\sigma^2/2$.
As a consequence, we expect that its contribution to the private communication rate
will increase with increasing value of the noise.
Indeed, a numerical analysis suggests that the private communication rate can exhibit a non-monotonic behavior 
as function of $\sigma$ for all values of $\theta$. Examples of this behavior are shown in figure \ref{Private}.

\medskip

We are then led to formulate the following conjecture:

\begin{conjecture}\label{forbprivate}
[The forbidden interval] The private communication rate $C_P$ shows a
non-monotonic behavior as a function of $\sigma$ 
for all $\theta\in\mathbbm{R}$ if $\nu_q$ is added at the sender's end.
\end{conjecture}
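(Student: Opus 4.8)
The plan is to treat $C_P$ from (\ref{CP}) as a function of $\sigma^2$ and to reduce its monotonicity to the competition between its two summands. With the noise inserted at the sender's end, $I(A:B)$ is the binary mutual information built from the success probability (\ref{pe1}) under the replacement $\sigma^2 \to \eta\sigma^2$; it is a smooth function of $\sigma^2$ whose monotonicity is governed by Proposition~\ref{prop:unassisted-comm}. The term $I(A:E)$ is the Holevo information of the non-Gaussian ensemble attached to the eavesdropper's variable (\ref{evemode}), and, as already noted, the monotonicity of the Holevo quantity under composition with the extra Gaussian-displacement channel forces $I(A:E)$ to be nonincreasing in $\sigma^2$. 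The essential new feature, absent in the receiver-side case, is therefore that $-I(A:E)$ contributes a nondecreasing term, so that $\frac{dC_P}{d\sigma^2} = \frac{dI(A:B)}{d\sigma^2} - \frac{dI(A:E)}{d\sigma^2}$ is the sum of a term that may be negative and one that is nonnegative.

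First I would pin down the two boundary regimes. At $\sigma^2 = 0$ the rate reduces to the finite number $C_P(0) = I(A:B)|_{0} - I(A:E)|_{0}$. As $\sigma^2 \to \infty$ every erf argument entering $I(A:B)$ scales like $1/\sigma$, while the noise swamping (\ref{evemode}) simultaneously drives Eve's discrimination to zero; hence $I(A:B) \to 0$ and $I(A:E) \to 0$, so $C_P \to 0$. The decisive quantitative input is the common leading order: both informations decay as $1/\sigma^2$, so that $\sigma^2 C_P \to C_B - C_E$ for constants $C_B, C_E \ge 0$ that I would extract from the small-signal (quadratic) expansions of the two mutual informations. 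The sign of $C_B - C_E$ then fixes whether $C_P$ approaches $0$ from above or below.

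With these asymptotics in hand, non-monotonicity is equivalent to $\frac{dC_P}{d\sigma^2}$ changing sign on $(0,\infty)$, and I would establish this sign change separately in the two ranges of $\theta$. When $\theta$ lies inside the forbidden interval of $I(A:B)$, Proposition~\ref{prop:unassisted-comm} makes $I(A:B)$ strictly decreasing near the origin, so $\frac{dC_P}{d\sigma^2} < 0$ for small $\sigma^2$; it then remains to show that the privacy gain $-\frac{dI(A:E)}{d\sigma^2}$ strictly overtakes $\frac{dI(A:B)}{d\sigma^2}$ on some intermediate interval, forcing $\frac{dC_P}{d\sigma^2} > 0$ there. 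When $\theta$ lies outside that interval, $I(A:B)$ already supplies a positive slope near the origin, while the $1/\sigma^2$ tail (together with the sign of $C_B - C_E$) supplies the opposite sign in the far regime. In either case a single sign change of the derivative yields the claimed non-monotonicity for that $\theta$.

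The hard part will be controlling $I(A:E)$ analytically. Because the average state produced by (\ref{evemode}) is non-Gaussian, there is no closed form for its Holevo information, so neither $\frac{dI(A:E)}{d\sigma^2}$ nor the tail coefficient $C_E$ is directly accessible. I would attack this by sandwiching $I(A:E)$ between computable Gaussian references, or by a direct perturbative evaluation of the Holevo quantity in the small-signal limit $\alpha_q/\sigma \to 0$, so as to lower-bound the privacy gain and thereby certify the dominating intermediate window \emph{uniformly} in $\theta$. It is precisely this uniform control of the non-Gaussian term, rather than any obstruction of principle, that blocks a fully rigorous argument and leaves the statement as a conjecture, in agreement with the numerical evidence of figure~\ref{Private}.
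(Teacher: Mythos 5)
First, a point of calibration: the statement you are asked to prove is explicitly a \emph{conjecture} in the paper, and the paper offers no proof of it. The authors' entire support consists of (i) the observation that, when $\nu_q$ is added at the sender's end, the eavesdropper's variable (\ref{evemode}) is degraded by the extra Gaussian displacement, so monotonicity of the Holevo information under channel composition makes $I(A:E)$ nonincreasing in $\sigma^2$, and (ii) the numerical evidence of figure~\ref{Private}. Your proposal reproduces exactly this heuristic as its starting point and then sketches how one might upgrade it to a theorem (boundary asymptotics at $\sigma^2\to0$ and $\sigma^2\to\infty$, a sign change of $dC_P/d\sigma^2$, Gaussian sandwich bounds or a small-signal perturbative expansion of the non-Gaussian Holevo quantity). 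That is a reasonable and more detailed roadmap than anything in the paper, and you correctly identify the obstruction --- the absence of a closed form for the Holevo information of the non-Gaussian ensemble --- as the same one that stopped the authors. But a roadmap is what it remains: none of the quantitative steps (the tail coefficients $C_B$, $C_E$, the lower bound on the privacy gain, the uniformity in $\theta$) is actually carried out, so the proposal does not prove the statement any more than the paper does. That is acceptable here, since the target is a conjecture, but it should be stated as such rather than presented as a proof.

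One step in your sketch is asserted without justification and is probably wrong as stated. For $\theta$ inside the forbidden interval of Proposition~\ref{prop:unassisted-comm} you claim $\frac{dC_P}{d\sigma^2}<0$ for small $\sigma^2$ because $I(A:B)$ is decreasing there. But near the origin you have two competing contributions, $\frac{dI(A:B)}{d\sigma^2}<0$ and $-\frac{dI(A:E)}{d\sigma^2}\geqslant 0$, and which dominates is precisely the quantitative question you have not resolved; indeed figure~\ref{Private} shows the curve for $\theta=0$ (deep inside the forbidden interval) \emph{rising} from $\sigma=0$, which indicates the privacy gain wins immediately, not eventually. Relatedly, knowing that $C_P\to0$ at $\sigma^2\to\infty$ ``from above or below'' via the sign of $C_B-C_E$ does not by itself produce an interior extremum unless it is combined with the value and slope at $\sigma^2=0$, and all of this must be controlled uniformly in $\theta\in\mathbbm{R}$ --- including large $|\theta|$, where both mutual informations are exponentially small and the leading-order comparison becomes delicate. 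These are the genuine gaps; the overall decomposition and the identification of the asymmetry between Bob's threshold detection and Eve's optimal (Holevo) detection are faithful to the paper's reasoning.
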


\begin{figure}
\centering
\includegraphics[width=0.45\textwidth] {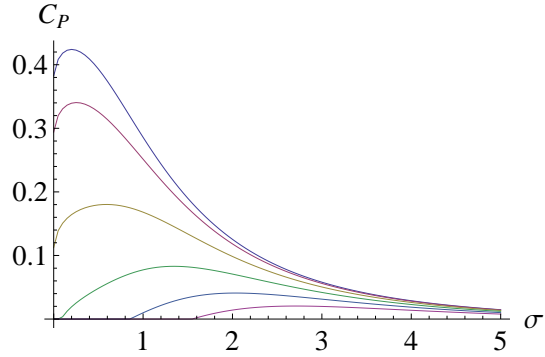}
\caption{The private communication rate (corresponding to the choice $\wp=1/2$), equation (\ref{CP}),
versus $\sigma$, for the case of noise added by the sender. 
The values of the parameters are $\eta=0.8$, $\alpha_{q}=1$ and $r=0$. 
Curves from top to bottom correspond respectively to $\theta=0$, $0.5$, $1$, $1.5$, $2$, $2.5$.} 
\label{Private}
\end{figure}

\section{Conclusion}

In conclusion, we have determined necessary and sufficient conditions for observing SR
when transmitting classical, private, and quantum information over a lossy bosonic 
channel or when discriminating lossy channels. 
Nonlinear coding and decoding by threshold mechanisms have been exploited together with the addition of Gaussian noise.

Specifically, we have considered a bit encoded into coherent
states with different amplitudes that are subsequently sent through a lossy
bosonic channel and decoded at the output by threshold measurement of their
amplitudes (without and with the assistance of entanglement shared by sender
and receiver).
We have also considered discrimination of lossy bosonic
channels with different loss parameters. In all these cases,
the performance is evaluated in terms of success probability. Since
the mutual information is a monotonic function of this probability, 
the same conclusions can be drawn in terms of mutual information.

SR effects appear whenever the threshold lies outside of the different 
forbidden intervals that we have established.
If it lies inside of a forbidden interval, then the SR effect does 
not occur. Actually, absolute maxima of success probability are obtained when the threshold is set in the 
middle of the forbidden interval.

Generally speaking, SR effects are known to improve analog-to-digital 
conversion performance \cite{Gamma}. 
In fact, if two distinct signals by continuous-to-binary conversion fall within the same interval 
they can no longer be distinguished. 
In such a situation the addition of a moderate amount of noise turns out to be useful as long as it
shifts the signals apart to help in distinguishing them.
While it is important to confirm this possibility also in the quantum framework, we have also 
shown that the same kind of effects may arise in a purely quantum framework.
Indeed, we have also considered the transmission of quantum information, represented 
by a qubit which is encoded into the state of a bosonic mode and then decoded 
according to a threshold mechanism. 
The found nonmonotonicity of the average channel fidelity and of the output 
entanglement (quantified by the logarithmic negativity) outside the forbidden interval, 
represents a clear signature of a purely quantum SR effect.

In all the above mentioned cases it does not matter whether the sender or the receiver adds the noise.
The exception occurs when the goal is to transmit private information. 
In fact, by considering achievable rates for private transmission over the lossy channel,
we have pointed out that the forbidden interval can change drastically, 
depending on whether the receiver or the sender adds noise. 
In the former case, it is exactly the same as the case of sending classical (non private) information.
In the latter case, we conjecture that it vanishes, i.e., the noise addition turns out to be beneficial always.
This feature of the private communication rate can be interpreted as a consequence of the asymmetry
between the legitimate receiver of the private information and the eavesdropper. In fact, while the
legitimate receiver is restricted to threshold detection, we have allowed the eavesdropper to use
more general detection schemes.

\section*{References}

\end{document}